\documentclass[11pt]{article}
\usepackage[dvipsnames]{xcolor}
\usepackage{amssymb,amsfonts,amsmath,amsthm,amscd,dsfont}

\usepackage{fullpage}
\usepackage[colorlinks]{hyperref}
\usepackage[capitalize, nameinlink]{cleveref}

\usepackage{tikz}

\hypersetup{
  linkcolor=[rgb]{0,0,0.4},
  citecolor=[rgb]{0, 0.4, 0},
  urlcolor=[rgb]{0.6, 0, 0}
}

\usepackage{complexity}

\newcommand{\F}{\mathbb{F}}
\newcommand{\N}{\mathbb{N}}
\newcommand{\Z}{\mathbb{Z}}

\newcommand{\Ext}{\mathsf{Ext}}
\newcommand{\Int}{\mathsf{Int}}

\newcommand{\set}[1]{\left\{ #1 \right\}}

\newcommand{\bool}[1]{{#1}_{\text{bool}}}

\newtheorem{theorem}{Theorem}[section]
\newtheorem{lemma}[theorem]{Lemma}
\newtheorem{definition}[theorem]{Definition}

\newtheorem{claim}[theorem]{Claim}

\newtheorem{corollary}[theorem]{Corollary}

\title{A Lower Bound for Read-Once Parity Branching Programs}
\author{Ben Lee Volk\thanks{Efi Arazi School of Computer Science, Reichman University, Israel. Email: \texttt{benleevolk@gmail.com}. The research leading to these results has received funding from the  Israel Science Foundation (grant number 843/23). }}

\date{}

\begin{document}

\maketitle

\abstract{
We prove an $\tilde{\Omega}(n^2)$ lower bound for read-once parity branching programs computing an explicit boolean function on $n$ variables. The previous best lower bound was $\tilde{\Omega}(n^{1.5})$. Our lower bound is proved by reducing the problem to a lower bound in algebraic circuit complexity.}

\section{Introduction}

Algebraic complexity is a beautiful and mathematically rich area that studies the complexity of symbolic computation of polynomials. Virtually all the known algorithms for algebraic problems (such as computing the determinant or permanent, multiplying matrices, or computing the discrete Fourier transform) are naturally modeled using algebraic models. One of its raisons d'être, however, is also the hope that lower bounds in the algebraic model will inspire lower bounds in the arguably more natural, and definitely more common, \emph{boolean} models of computation. A long line of work on lower bounds for algebraic models has had numerous successes, such as, to give a non-exhaustive list, super-polynomial lower bounds for monotone circuits \cite{JS82}, non-commutative formulas \cite{Nisan91}, multilinear formulas \cite{Raz09, RY09}, and bounded-depth circuits \cite{LST25, Forbes24}; and super-linear lower bounds for circuits \cite{Str73,BS83}, algebraic branching programs and formulas \cite{CKSV22, Kalorkoti85}. More comprehensive surveys of lower bounds in algebraic complexity are \cite{Sap15, SY10}.

These lower bounds use the \emph{syntactic} nature of the computation. For some, it is not clear what the analogous boolean model is, and for some the corresponding lower bounds for boolean models have been in fact known even earlier.

Motivated by considerations from proof complexity, there has been some work on \emph{functional} lower bounds for algebraic circuits \cite{GR00, FSTW21, FKS16, HLT24}. These are lower bounds for algebraic models that do not apply only to a single polynomial, but rather to a set of polynomials all computing the same function over some limited domain.

In this paper, we give an instance in which one can prove a lower bound on a bona fide boolean model of computation by reducing to a lower bound on an algebraic model of computation. One of the main obstacles to obtaining lower bounds on boolean circuits using lower bounds on algebraic circuits is that boolean circuits can exploit boolean identities that do not hold in the algebraic setting.
One can trivially convert a boolean circuit $C$ computing a function $g : \set{0,1}^n \to \set{0,1}$ to an algebraic circuit over $\F_2$ gate-by-gate (say by replacing AND gates with multiplication gates and NOT gates with gates that add, modulo 2, the boolean value `$1$'), and the resulting algebraic circuit computes a polynomial that agrees with $g$ on the boolean cube. But its specific form depends on the circuit $C$: as a trivial example, the boolean function $g(x)=x$ is functionally identical to the function $g(x)=x \wedge x$, but the straightforward way alluded to above for converting a boolean circuit computing $x \wedge x$ to a polynomial would result in the polynomial $x^2$, which is distinct from the polynomial $x$. Therefore, a lower bound on algebraic circuits computing a specific polynomial doesn't rule out the possibility that there's a different efficient way to compute the same function over the boolean domain.

The driving force behind our method is that some boolean models of computation yield \emph{multilinear} polynomials when one applies the natural transformation that ``algebrizes'' them. Since two multilinear polynomials that agree on $\F_2^n$ are identical, we can deduce exactly which polynomial is obtained after this transformation, and prove (syntactic) lower bounds for this polynomial. The easy proofs for these observations appear in \cref{sec:alg-to-bool}.

\subsection{Read-Once Parity Branching Programs}

The model we consider is the following:

\begin{definition}
\label[definition]{def:parity-BP}
A read-once parity branching program ($\oplus$-BP) is a directed, acyclic multigraph with a source node $s$ and a target node $t$. Every edge in the graph is labeled either by a constant in $\set{0,1}$, a variable $x_i$ or a negated variable $\neg x_i$. On every $s \to t$ path, every variable appears at most once. The program accepts an input $x$ if the number of $s \to t$ paths consistent with $x$ is odd. The \emph{size} of the program is the number of edges.
\end{definition}

Parity branching programs have been considered as a natural extension of deterministic branching programs, and a natural variant of non-deterministic branching programs (see Part V of the book \cite{Juk12} for a thorough survey on this area).
In the deterministic and non-deterministic models, exponential lower bounds for read-once branching programs have been known for decades \cite{Zak84, BHST87, Wegener88} and there are even lower bounds for branching programs that are allowed to read every input at most $k$ times along every path for any constant $k$ \cite{Oko91, BRS93, Thathachar98}.
Note that in the definition above the restriction is syntactic: we require that on every path, every variable appears at most once. The ``semantic'' model that only imposes this condition on paths consistent with some input has also been considered in the branching program literature (in \cite{Juk12} it is called \emph{weakly} read-once, and it is shown to be exponentially more powerful than the syntactic model).

However, the lower bounds for the deterministic and non-deterministic models do not apply in the parity branching program model. Jukna \cite[Research Problem 16.14]{Juk12} explicitly poses the question of proving exponential lower bounds for read-once $\oplus$-BPs.
Jukna proves an exponential lower bound when the branching program is \emph{oblivious}. An oblivious read-once $\oplus$-BP is a read-once branching program which is also layered, and in every layer all edges are labeled using the same variable. Prior to this work, the best lower bound for read-once $\parity$-BPs was $\Omega(n^{3/2} / \log n)$, for the element distinctness function, which follows by adapting Nečiporuk's \cite{Nec66} method to this model (the argument appears in \cite{KW93}, where it is attributed to Pudlák). This lower bound in fact holds for general parity branching programs, even without the read-once restriction. Cheraghchi, Hirahara, Myrisiotis and Yoshida proved a similar lower bound for the meta-complexity problem $\mathsf{MKTP}$ \cite{CHMY24}, and this model was also studied by Homeister \cite{SavickyS05, Homeister06, BHW03}, who proved lower bounds in some restricted settings.

Our main result is an almost quadratic lower bound for read-once parity branching programs.

\begin{theorem}
\label[theorem]{thm:main}
There exists an explicit family of functions $\set{f_n : \set{0,1}^n \to \set{0,1}}_{n \in \N}$ such that any read-once parity branching program computing $f_n$ has size $\Omega(n^2/\log^2 n)$. 
\end{theorem}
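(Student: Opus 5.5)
We reduce the boolean lower bound to a syntactic lower bound for algebraic branching programs over $\F_2$. Let $B$ be a read-once $\oplus$-BP computing $f_n$. We turn $B$ into an algebraic branching program $A$ over $\F_2$ with the same underlying graph and the same number of edges: an edge labeled $x_i$ keeps the label $x_i$, an edge labeled $\neg x_i$ gets the label $1+x_i$, and constant labels are kept. $A$ computes the sum over all $s\to t$ paths of the product of the edge labels. On any $a\in\F_2^n$ this sum is exactly the parity of the number of paths consistent with $a$, so it agrees with $f_n$ on the cube.

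Because of the syntactic read-once condition, every path contributes a product of affine functions in pairwise distinct variables. Each such product is multilinear, so the whole polynomial $P_A$ computed by $A$ is multilinear. A multilinear polynomial over $\F_2$ is determined by its values on $\F_2^n$, so $P_A$ equals the unique multilinear extension $\hat f_n$ of $f_n$. Hence any read-once $\oplus$-BP for $f_n$ of size $s$ gives an algebraic branching program of size $s$ computing $\hat f_n$, and in fact a \emph{syntactically} read-once ABP, which we may keep as extra structure. We get the same graph, with each edge holding an affine form in one variable.

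It therefore suffices to choose an explicit $f_n$ whose multilinear extension requires ABPs of size $\Omega(n^2/\log^2 n)$ over $\F_2$. The natural route is the known quadratic ABP lower bound of \cite{CKSV22}. That bound gives $\Omega(m^2)$ edges for polynomials such as $\sum_i x_i^{m}$ in $m$ variables, and more generally for polynomials whose high-order partial derivatives behave like those of a power sum, with the argument working over any field. Since our polynomial must be multilinear, we simulate a high power of each variable by a block of $O(\log n)$ boolean variables. Concretely, we take about $m=n/\log n$ groups and let $f_n$ be built from elementary symmetric polynomials or products inside each group, chosen so that the multilinear extension has degree about $m$ per group (or overall). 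We then adapt the \cite{CKSV22} argument, which homogenizes the ABP, takes partial derivatives, and sets variables to zero to find many disjoint edge layers, to this multilinear, small-characteristic setting. The $\log^2 n$ loss would come from the $\log n$ group size entering both the variable count and the degree.

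The main obstacle is this last step. The \cite{CKSV22} proof relies on properties of power sums such as $x_i^{d}$ and on differentiation, and in characteristic 2 derivatives of multilinear or high-degree monomials can vanish. The first thing I would try is to replace derivatives by the multilinear analogues that work over $\F_2$: restricting a block of variables to a fixed setting, or taking the coefficient extraction $\partial_{x_i}$ of multilinear polynomials, which is fine over any field. I would then pick $f_n$ (for example $\sum_{j}\prod_{i\in S_j}x_i$ with a structured family of sets $S_j$ of size $\approx\log n$) so that the key property in their proof, namely that restrictions leave an irreducible or nondegenerate low-degree part forcing $\Omega(m)$ edges per layer across $\Omega(m)$ layers, still holds. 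I would also use read-onceness if it simplifies homogenization.
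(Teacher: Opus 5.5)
Your first step is exactly the paper's \cref{cl:alg-to-bool}. Replacing $\neg x_i$ by $1+x_i$ turns a read-once $\oplus$-BP of size $S$ into a syntactically multilinear ABP of size $S$ computing the unique multilinear extension of $f_n$. The gap is everything after that. The algebraic lower bound, which is the heart of the theorem, is not proved, and the route you sketch would not give it.

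\begin{itemize}
\item The quadratic bound of \cite{CKSV22} is for \emph{layered} ABPs and is proved over fields of characteristic $0$. The bounds that paper obtains for unlayered programs are much weaker. The programs in \cref{thm:main} are not assumed layered, and layering can multiply the size by the depth. So even a complete $\F_2$ adaptation of \cite{CKSV22} would only reproduce the $\Omega(n^2)$ bound for \emph{layered} programs that the paper records separately in \cref{sec:layered} (for $\bool{(S_{n,n/10})}$, with \cite{Orzel25} supplying the characteristic-free ingredient).
\item The multilinear ``simulation of high powers'' is empty as stated. On the cube $x_i^m=x_i$, so power sums collapse. You give no concrete $f_n$ and no argument that the partial-derivative structure \cite{CKSV22} relies on survives your block encoding.
\item You give no derivation of the $\log^2 n$ factor; it is reverse-engineered from the target.
\item Your ``it suffices'' target, a bound against general ABPs, discards syntactic multilinearity. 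That is exactly what the known near-quadratic bound exploits.
\end{itemize}

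The missing idea is to use syntactic multilinearity as the source of hardness. The paper simulates a syntactically multilinear ABP of size $S$ by a syntactically multilinear circuit of size $O(S)$ (\cref{cl:abp-to-ckt}). It then applies the $\Omega(n^2/\log^2 n)$ lower bound of \cite{RSY08, AKV20} for syntactically multilinear circuits computing the full-rank polynomial of \cref{def:hard-poly}. That bound holds over every field, including $\F_2$, and needs no layering.

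What remains, and what your proposal never addresses, is explicitness. $\bool{f}$ is a parity over exponentially many sets $B$. The paper places it in $\P$ with a dynamic program that cuts the cycle at the leader $\ell$ and uses the recurrences for $\Ext_{i,j}$ and $\Int_{i,j}$ (\cref{thm:counting-sets}).
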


Here ``explicit'' means that  there exists a polynomial-time Turing machine $M$ that on input $x=(x_0, \ldots,x_{n-1})$ computes $f_n (x)$.

\medskip

Note that in \cref{def:parity-BP} we didn't insist that the graph is layered. Jukna \cite{Juk12} similarly does not require branching programs to be layered graphs. Indeed, when branching programs are used as a model of computation in a circuit complexity context (as opposed to using them to model space-bounded computation of Turing machines), requiring them to be layered imposes a rather artificial constraint. Of course, any branching program can be made to be layered with a polynomial blow-up, so this wouldn't have mattered if we could prove super-polynomial lower bounds. In fact, if we assume that the graph is layered, we can prove a slightly better $\Omega(n^2)$ lower bound much more easily. We provide the details in \cref{sec:layered}.

We further remark that (again, as in \cite{Juk12}), we measure the size of the program by the number of edges: since the in-degree of any vertex is unbounded, this is a natural complexity measure. This is another distinction that is only important insofar as the lower bounds we can prove are merely polynomial.

\subsection{Technique}

As mentioned above, we observe a natural connection between the problem of proving lower bounds for read-once $\oplus$-BPs and 
 the problem of proving lower bounds for multilinear algebraic branching programs, a long-standing problem in algebraic complexity theory (see, for example, the recent works \cite{CKSS24, FLSY26}).
 
A natural ``algebrization'' operation on parity branching programs computing a boolean function $h$ results in a syntactic multilinear branching program, a well-studied model in algebraic complexity (see \cref{sec:alg-to-bool}). Further, applying this operation on any read-once parity branching program would give an algebraic branching program computing the unique multilinear polynomial that agrees with $h$ on $\F_2^n$.
 
 To prove our lower bound, we resort to a result of Alon, Kumar and the author \cite{AKV20} which proves such a lower bound for the model of syntactically multilinear \emph{circuits} (improving an earlier result of \cite{RSY08}). Circuits are stronger than branching programs, but when one deals with lower bounds that are merely super-linear (rather than super-polynomial) one has to be a bit careful when defining the model. However it turns out that the lower bound does apply to the algebraic branching programs that are obtained by converting $\oplus$-BPs to algebraic models. These models are defined in \cref{sec:alg-to-bool}.
 
 The last remaining ingredient then is to prove that the family of functions for which the lower bound of \cite{AKV20} applies is explicit (in the sense of being in $\P$). This does not follow immediately from the results of \cite{AKV20} (as the definition of their polynomial involves a sum over a set of exponential size), and requires some work. We give a dynamic programming algorithm that computes this function. This algorithm appears in \cref{sec:dp}.
 
 As a by-product, we also slightly tighten the results of \cite{AKV20} and prove a lower bound for a polynomial in $\VP$ (the lower bound in \cite{AKV20} was claimed for a polynomial in $\VNP$). The details appear in \cref{sec:VP}.

\section{Syntactically Multilinear ABPs and Read Once Parity Branching Programs}
\label[section]{sec:alg-to-bool}

We start by defining the algebraic analog of read-once $\oplus$-BPs.

\begin{definition}
\label[definition]{def:ABP}
A syntactically multilinear algebraic branching program (ABP) over $\F_2$ is a directed, acyclic multigraph with a source node $s$ and a target node $t$. Every edge in the graph is labeled either by a constant in $\F_2$ or a linear function in some $x_i$. On every $s \to t$ path, every variable appears at most once. Each $s \to t$ path computes the product of the labels on the path, and the program computes the sum, over all $s \to t$ paths, of the polynomials computed by the paths. The \emph{size} of the program is the number of edges.
\end{definition}

Note that unlike some common definitions in the literature, we didn't allow edges to be labeled by arbitrary linear functions in the variables, but rather only by a constant or a linear function in a single variable.

Similarly, an algebraic circuit $C$ is called \emph{syntactically multilinear} if every multiplication gate in $C$ multiplies two variable-disjoint subcircuits. Circuits are stronger than ABPs:

\begin{claim}
\label[claim]{cl:abp-to-ckt}
Suppose $f \in \F_2[x_1, \ldots, x_n]$ is computed by a syntactically multilinear ABP of size $S$. Then $f$ is computed by a syntactically multilinear circuit of size $O(S)$.
\end{claim}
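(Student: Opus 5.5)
The natural plan is the standard ABP-to-circuit conversion: process the DAG in topological order and compute, for every vertex $v$, the polynomial $f_v$ equal to the sum over all $s \to v$ paths of the product of the edge labels. Set $f_s = 1$. For $v \neq s$ with incoming edges $e_1=(u_1,v),\ldots,e_k=(u_k,v)$ labeled $\ell_1,\ldots,\ell_k$, put
\[
f_v = \sum_{j=1}^k \ell_j \cdot f_{u_j}.
\]
Each label $\ell_j$ is a constant or a linear function $a x_i + b$, so it costs $O(1)$ gates. Each summand costs one multiplication gate, and the sum costs $k-1$ addition gates in a binary tree (or a single gate if unbounded fan-in is allowed). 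The total is $O(\sum_v \deg^{-}(v)) = O(S)$, and $f_t = f$ by induction along the topological order.

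The main obstacle is syntactic multilinearity. The multiplication gate $\ell_j \cdot f_{u_j}$ must multiply variable-disjoint subcircuits, but the subcircuit for $f_{u_j}$ may syntactically mention $x_i$ through $s \to u_j$ paths that are not part of any $s \to t$ path. The read-once condition in \cref{def:ABP} constrains only $s\to t$ paths.

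I would handle this by preprocessing. First, delete every vertex that does not lie on some $s \to t$ path, together with its edges. This does not change $f$ and does not increase $S$. After this, every $s \to u$ path extends to an $s \to t$ path through $u$. Now suppose edge $e=(u,v)$ is labeled by a function of $x_i$, and some $s \to u$ path $P$ uses $x_i$. Since $v$ also lies on an $s\to t$ path, $v$ has a path $Q$ to $t$. Then $P$, followed by $e$, followed by $Q$, is an $s \to t$ path reading $x_i$ twice, which is a contradiction.

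It remains to check that the variable set of the circuit computing $f_u$ is exactly the union of the variables on $s\to u$ paths. This follows by induction, since the circuit for $f_u$ is built only from labels of edges on such paths. So every multiplication gate is syntactically multilinear.

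One more subtlety is that the label itself, say $x_i+1$, must be written without any product involving $x_i$. It is a sum gate over the leaves $x_i$ and $1$, so this is fine. Constant labels pose no problem: they are variable-free, so products with them are trivially disjoint, or one can simply fold the constant in.
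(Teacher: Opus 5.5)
Your proposal is correct and follows the paper's vertex-by-vertex simulation, computing $f_v=\sum_j \ell_j f_{u_j}$ with $O(1)$ cost per edge. Your preprocessing step, which deletes every vertex not on any $s\to t$ path, makes explicit something the paper's one-line multilinearity argument takes for granted: the read-once condition constrains only $s\to t$ paths, so the argument needs every $s\to u$ path to extend through each out-edge of $u$ to $t$.
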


\begin{proof}
Simulate the ABP vertex by vertex. For a vertex $v$, the polynomial computed by $v$, denoted $f_v$, is defined to be the polynomial computed by the sub-ABP whose source is $s$ and sink is $v$.
By induction from $s$, for every vertex $v$ in the ABP we add a sum gate $v'$ to the circuit computing $f_v$. If $v$ has incoming edges from vertices $u_1, \ldots, u_k$ with edge labels $\ell_1, \ldots, \ell_k$, the sum gate $v'$ computes $\sum_{i=1}^k \ell_i f_{u_i}$, where the gates $u'_i$ computing $f_{u_i}$ have been already added to the circuit by induction, and $\ell_i$ is a linear function in a single variable and can be computed by a circuit of size $O(1)$.

For every edge in the ABP we need to add $O(1)$ edges to the circuit. Since the ABP is syntactically multilinear, every product gate multiplies a linear function in a variable $x_i$ by a subcircuit $C$ in which $x_i$ doesn't appear, so the circuit is syntactically multilinear.
\end{proof}

We now construct a polynomial that requires syntactically multilinear circuits (and hence syntactically multilinear ABPs) of size $\Omega(n^2/\log^2 n)$.

In what follows, we associate $\Z_n = \Z / (n)$ with the set $\set{0,1,\ldots,n-1}$ with addition modulo $n$. However, we sometimes think of the elements of $\Z_n$ as integers under the natural ordering.

\begin{definition}
\label[definition]{def:hard-poly}
Let $\F$ be a field and $n$ an even integer.
Let $B \subseteq \Z_n$ be a subset of size $n/2$. Construct a bijection $\sigma_B : B \to \Z_n \setminus B$ in the following manner: think of the $n$ elements of $\Z_n$ arranged on a cycle in a clockwise direction. Starting from $0$ and going clockwise, pick the first element $j \in B$ such that the next element on the cycle, $k$, is not in $B$. Define $\sigma_B(j)=k$, erase $j$ and $k$ from the cycle and continue in that manner until all elements in $B$ are assigned values. Let $f_B (x_0, \ldots, x_{n-1}) = \prod_{j \in B} (x_j + x_{\sigma_B(j)})$. Finally, define the following polynomial in $\F[x_0, \ldots, x_{n-1}, y_0, \ldots, y_{n-1}]$:
\[
f(x_0, \ldots, x_{n-1}, y_0, \ldots, y_{n-1}) = \sum_{\substack{B \subseteq \Z_n \\ |B|=n/2}} \prod_{j \in B} y_j \cdot f_B(x_0, \ldots, x_{n-1}).
\]
\end{definition}

This polynomial was constructed by Raz, Shpilka and Yehudayoff \cite{RSY08}, who proved a super-linear lower bound on the size of syntactically multilinear circuits computing it. This lower bound was improved in \cite{AKV20}:

\begin{theorem}[\cite{RSY08, AKV20}]
\label[theorem]{thm:lb-ckt}
Let \[
f(x_0, \ldots, x_{n-1}, y_0, \ldots, y_{n-1}) \in \F[x_0, \ldots, x_{n-1}, y_0, \ldots, y_{n-1}]
\]
be defined as above. Any syntactically multilinear circuit computing $f$ has size $\Omega(n^2 / \log^2 n)$.
\end{theorem}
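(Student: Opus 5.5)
This is the lower bound of Alon, Kumar and the author \cite{AKV20}, building on Raz, Shpilka and Yehudayoff \cite{RSY08}. The plan follows their structure: a structural decomposition of small syntactically multilinear circuits, combined with a rank argument in which the $y$-variables select which pairing $\sigma_B$ is visible.

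\emph{Step 1 (structure).} Let $C$ be a syntactically multilinear circuit of size $s$ computing $f$. By a Valiant–Hyafil-style argument adapted to syntactic multilinearity (as in \cite{RSY08}), we write $f$ as a short sum
\[
f=\sum_{i=1}^{O(s)} g_i\cdot h_i + (\text{terms depending on few variables}),
\]
where each product $g_i h_i$ is variable-disjoint and each $g_i$ is computed at a gate of $C$. The key quantity is the set $X_v$ of $x$-variables on which a gate $v$ depends. If $C$ is small, then for a typical $x$-variable only $O(s/n)$ gates ``see'' it in a nontrivial way. In \cite{AKV20} this is refined into a counting statement: if $s = o(n^2/\log^2 n)$, then we can find a gate $v$ and an interval-like set $I$ of about $\log n$ indices such that $v$ separates the variables of $I$ from the rest. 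Restricting the $y$-variables, by setting some $y_j$ to $0$ and others to $1$, kills all monomials except those with a chosen $B$, and we can choose $B$ so that many of the pairs $(j,\sigma_B(j))$ cross the partition induced by $v$.

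\emph{Step 2 (rank).} Each crossing pair $x_j+x_{\sigma_B(j)}$ contributes a factor of $2$ to the partial-derivative (coefficient-matrix) rank of $f_B$ with respect to the induced partition. So after the restriction, $f$ has rank $2^{\Omega(k)}$ when there are $k$ crossing pairs. On the other hand, the decomposition of Step 1 bounds this rank by roughly $s\cdot 2^{(\text{number of variables in the unbalanced part})}$. These two bounds are incompatible unless $s=\Omega(n^2/\log^2 n)$.

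\emph{Step 3 (choosing $B$).} The heart of the argument is the combinatorial choice of $B$. The greedy cyclic pairing $\sigma_B$ lets us choose $B$ so that any prescribed collection of $O(n/\log n)$ disjoint intervals of length about $\log n$ forces matched pairs straddling interval boundaries. This is exactly why the construction uses cyclic nearest-neighbor matchings.

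I expect Step 3, together with the gate-selection counting in Step 1, to be the main obstacle: this is where the $\log^2 n$ loss arises. Since the theorem is stated as a cited result, within this paper I would simply invoke \cite{AKV20}, noting that their argument never uses properties of the field, so it holds over $\F_2$ in particular.
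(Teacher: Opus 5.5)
Your bottom line (invoke \cite{RSY08, AKV20}, whose argument is field-independent and so applies over $\F_2$) is exactly what the paper does, and by itself it suffices. The problem is your account of \emph{why} the cited bound holds. Step~3 puts the heart of the argument in the greedy cyclic structure of $\sigma_B$ (``this is exactly why the construction uses cyclic nearest-neighbor matchings''). The paper says the opposite: the proofs in \cite{RSY08, AKV20} only use that $\sigma_B$ is \emph{some} bijection $B \to \Z_n\setminus B$. The nearest-neighbor structure is there only so that the dynamic program of \cref{sec:dp} works, which gives explicitness of $\bool{f}$ and $f\in\VP$.

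Once a partition $(Y,Z)$ of the $x$-variables is fixed, choosing $B$ is trivial. Take any $B$ with $|B|=n/2$ containing the smaller side, say $Y$, and set $y_j=1$ for $j\in B$ and $y_j=0$ otherwise. Every $j\in Y$ is then paired with $\sigma_B(j)\in\Z_n\setminus B\subseteq Z$, so $f_B$ has rank exactly $2^{\min(|Y|,|Z|)}$, for every bijection $\sigma_B$. Hence $f$, viewed as a polynomial in $x$ over $\F(y)$, is full rank under \emph{every} partition. That is the only property of $f$ that \cite{AKV20} needs. There are no interval boundaries to straddle and nothing to optimize in the choice of $B$: all pairs touching the smaller side cross, not just ``many.''

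The actual difficulty is on the circuit side, which your Steps~1--2 describe vaguely and partly inaccurately. The structural lemma writes (a restriction of) $f$ as a short sum $\sum_i g_i h_i$ of variable-disjoint products, with each $g_i$ computed at a gate. For example, under a balanced partition that is unbalanced by $d_i$ on the variable set $X_{g_i}$, the term $g_ih_i$ has rank at most $2^{(n-d_i)/2}$, whereas $f$ has rank $2^{n/2}$. To get a contradiction you need a \emph{single} partition that is unbalanced by $\Omega(\log n)$ on \emph{all} the sets $X_{g_i}$ at once.

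These sets are arbitrary subsets dictated by the circuit, not intervals of length about $\log n$ isolated by one gate. The existence of such a partition is the unbalancing-set lemma, which \cite{AKV20} prove via an asymptotically optimal bound for a generalized Galvin problem. That lemma, not the choice of $B$, is what improves the $n^{4/3}/\log^2 n$ bound of \cite{RSY08} to $n^2/\log^2 n$. So as an outline, your sketch points to a property of $\sigma_B$ that plays no role and omits the lemma that does the work.
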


\cref{thm:lb-ckt} holds over any field, in particular over $\F_2$. We remark that the proofs in \cite{RSY08, AKV20} don't use the special structure of the bijection $\sigma_B$ outlined above, but rather only need $\sigma_B$ to be some bijection from $B$ to $\Z_n \setminus B$. This structure however will come in handy in \cref{sec:dp}.

The following corollary follows immediately from \cref{cl:abp-to-ckt}.

\begin{corollary}
\label[corollary]{thm:lb-ABP}
Any syntactically multilinear ABP computing $f$ has size $\Omega(n^2 / \log^2 n)$.
\end{corollary}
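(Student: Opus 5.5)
The plan is to derive the bound by contraposition from \cref{thm:lb-ckt}, with \cref{cl:abp-to-ckt} as the bridge. Suppose $f$ is computed by a syntactically multilinear ABP of size $S$ over $\F_2$. Apply \cref{cl:abp-to-ckt} to this ABP. It produces a syntactically multilinear circuit of size at most $cS$ for some absolute constant $c$ that computes the same polynomial $f$.

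Next I apply \cref{thm:lb-ckt} over $\F_2$, which is allowed since that theorem holds over every field. It says the circuit has size at least $c' n^2/\log^2 n$ for some constant $c'>0$ and all sufficiently large $n$. Combining the two bounds gives
\[
cS \ge c' n^2/\log^2 n, \qquad\text{so}\qquad S \ge (c'/c)\, n^2/\log^2 n = \Omega(n^2/\log^2 n).
\]

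I expect no step to be a real obstacle. The only points to check are the following.
\begin{itemize}
\item The simulation in \cref{cl:abp-to-ckt} accepts edge labels that are either constants or linear functions in a single variable. This is exactly what \cref{def:ABP} allows.
\item The circuit model in \cref{cl:abp-to-ckt} and the one in \cref{thm:lb-ckt} must agree. In particular, size should be measured by the number of edges, or by a quantity within a constant factor of it. This matters because the bounds are only polynomial, so constant-factor conversions between size measures are all we can afford to lose.
\item The polynomial $f$ here has $2n$ variables rather than $n$. This changes the bound only by constants, since $(2n)^2/\log^2(2n) = \Theta(n^2/\log^2 n)$.
\end{itemize}
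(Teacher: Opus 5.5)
Your proposal is correct and is exactly the paper's argument: the paper obtains the corollary immediately by composing \cref{cl:abp-to-ckt} (a syntactically multilinear ABP of size $S$ yields a syntactically multilinear circuit of size $O(S)$) with \cref{thm:lb-ckt} applied over $\F_2$. Your remark about the $2n$ variables is not needed, since \cref{thm:lb-ckt} is already stated in terms of the same parameter $n$, but it does no harm.
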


\subsection{Connections between the Algebraic Model and the Boolean Model}

For a polynomial $f \in \F_2[x_0, \ldots, x_{n-1}]$, let $\bool{f} : \set{0,1}^n \to \set{0,1}$ be the boolean function that $f$ represents.
It is clear that an \emph{upper bound} on the ABP complexity of $f$ gives an upper bound on the $\parity$-BP complexity of $\bool{f}$, by ``booleanizing'' the ABP: replacing every label $1+x_i$ by $\neg x_i$ and treating the new graph as a boolean $\oplus$-BP gives a $\oplus$-BP that computes the same function as $f$ on any inputs in $\set{0,1}^n$, and therefore correctly computes $\bool{f}$.

This observation shows that solving Jukna's  \cite[Research Problem 16.14]{Juk12} would have major consequences in algebraic complexity: indeed, proving such a lower bound for $\bool{f}$ would show a lower bound on the syntactically multilinear ABP size of $f$. Furthermore, since syntactically multilinear circuits can be simulated by formulas (and hence ABPs) with a quasi-polynomial blow-up \cite{RY08}, such a result would even imply an exponential lower bound on syntactically multilinear \emph{circuits}.

We remark that Jukna's oblivious model corresponds to a model called read-once oblivious ABPs which was well-studied in algebraic complexity (see, e.g., \cite{FS13}).

In the read-once setting, one could also deduce boolean complexity \emph{lower bounds} from algebraic complexity lower bounds. The following claim is incredibly simple, but it is perhaps the key point behind our lower bound.

\begin{claim}
\label[claim]{cl:alg-to-bool}
Let $f \in \F_2[x_0, \ldots, x_{n-1}]$ be a multilinear polynomial. If $f$ requires syntactically multilinear ABPs of size $S$, then $\bool{f}$ requires read-once $\oplus$-BPs of size $S$.
\end{claim}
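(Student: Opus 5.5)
We argue by contrapositive: starting from a read-once $\oplus$-BP $P$ of size $S' < S$ computing $\bool{f}$, we build a syntactically multilinear ABP of size $S'$ computing $f$ itself, which contradicts the assumption on $f$.

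The construction is the natural ``algebrization'' of $P$. We keep the same multigraph, the same source $s$, the same target $t$, and relabel each edge. A constant $c\in\set{0,1}$ becomes the constant $c\in\F_2$. A label $x_i$ stays $x_i$. A label $\neg x_i$ becomes the linear polynomial $1+x_i$. Every new label is a constant or a linear function in a single variable, as \cref{def:ABP} requires. The number of edges is unchanged, so the size is still $S'$. The read-once condition in \cref{def:parity-BP} says that on every $s\to t$ path each variable appears at most once, whether negated or not. Hence the resulting ABP $A$ is syntactically multilinear.

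Next we identify the polynomial $g$ that $A$ computes. Each path contributes the product of its labels. On that path the labels involve pairwise distinct variables, so the product is multilinear. A sum of multilinear polynomials over $\F_2$ is multilinear, so $g$ is multilinear.

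We also evaluate $g$ on an input $a\in\set{0,1}^n$. For each edge, the label evaluates to $1$ exactly when the corresponding boolean edge is consistent with $a$: $x_i$ gives $a_i$, and $1+x_i$ gives $1+a_i=\neg a_i$ in $\F_2$. So a path's product is $1$ exactly when the path is consistent with $a$, and $0$ otherwise. Therefore $g(a)$ is the number of consistent $s\to t$ paths modulo $2$, which is exactly $P(a)=\bool{f}(a)=f(a)$.

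Finally, $f$ and $g$ are multilinear polynomials over $\F_2$ that agree on all of $\F_2^n$. Such polynomials are identical, because the multilinear monomials form a basis of the functions $\F_2^n\to\F_2$; equivalently, $f-g$ is multilinear and vanishes on the cube, so by induction on variables it is zero. Hence $A$ computes $f$ with size $S'<S$, a contradiction.

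The argument is routine throughout. The only point needing care is that the read-once restriction is syntactic, holding on every path rather than only on consistent paths. This is what makes the ABP syntactically multilinear and $g$ multilinear, and it is why the uniqueness step applies.
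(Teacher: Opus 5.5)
Your proof is correct and follows the paper's argument: algebrize the $\oplus$-BP by replacing $\neg x_i$ with $1+x_i$, note that the result is a syntactically multilinear ABP of the same size computing a multilinear polynomial that agrees with $f$ on $\F_2^n$, and conclude that this polynomial equals $f$. You also spell out two points the paper leaves implicit: why the computed polynomial is multilinear, and why its value on the cube is the parity of the number of consistent paths.
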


\begin{proof}
Consider any read-once $\oplus$-BP computing $\bool{f}$ of size $S'$. ``Algebrize'' the branching program by replacing every label $\neg x_i$ by $1+x_i$, and treating it as a syntactically multilinear ABP. This ABP computes a multilinear polynomial $g \in \F_2[x_0, \ldots, x_{n-1}]$ that agrees with $f$ on $\F_2^n$, in the sense that for every $x$, $g(x)=f(x)$. Since $f$ and $g$ are both multilinear, $g=f$. Thus, we obtained a syntactically multilinear ABP computing $f$, which implies by our assumption that $S'\ge S$.
\end{proof}

\begin{corollary}
\label[corollary]{cor:bool-lower-bound}
Let $f$ be as in \cref{def:hard-poly}. Then $\bool{f}$ requires read-once $\oplus$-BPs of size $\Omega(n^2 / \log^2 n)$.
\end{corollary}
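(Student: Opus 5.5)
The plan is to combine \cref{thm:lb-ABP} with \cref{cl:alg-to-bool}, working over $\F=\F_2$ throughout. First we check that $f$ from \cref{def:hard-poly} is multilinear over $\F_2$, as \cref{cl:alg-to-bool} requires. Each summand is $\prod_{j\in B} y_j \cdot \prod_{j\in B}(x_j+x_{\sigma_B(j)})$. The $y$-part is a product of distinct variables. In the $x$-part, the factors use the variable sets $\set{j,\sigma_B(j)}$ for $j \in B$, and these are pairwise disjoint because $\sigma_B$ is a bijection from $B$ onto $\Z_n\setminus B$; hence no $x$-variable appears in two different factors. Each summand is therefore multilinear, and so is the sum $f$. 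Note that $f$ lives in $2n$ variables $x_0,\ldots,x_{n-1},y_0,\ldots,y_{n-1}$. \cref{cl:alg-to-bool} was stated for $n$ variables, but its proof works for any number of variables, since two multilinear polynomials over $\F_2$ that agree on the whole cube are equal.

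Next we take $S=\Omega(n^2/\log^2 n)$ from \cref{thm:lb-ABP}: by that corollary, every syntactically multilinear ABP over $\F_2$ computing $f$ has size at least $S$. This uses that \cref{thm:lb-ckt} holds over every field, in particular $\F_2$, together with the simulation in \cref{cl:abp-to-ckt}. Applying \cref{cl:alg-to-bool} to $f$ with this $S$ shows that every read-once $\oplus$-BP computing $\bool{f}$ has size at least $S$.

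Finally we restate the bound in the number of input bits. The input length is $N=2n$, so $n^2/\log^2 n=\Theta(N^2/\log^2 N)$ and the bound has the same form. For odd $N$ one can pad with a dummy variable; we only need the statement for even $n$ as in the definition.

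There is essentially no obstacle here. The one point needing care is to confirm the hypotheses match: \cref{thm:lb-ckt} must be applied over $\F_2$, and the ABP model of \cref{def:ABP} (single-variable linear labels) must be the one produced by the algebrization in \cref{cl:alg-to-bool}, where $\neg x_i$ becomes $1+x_i$. It is, so the corollary follows directly.
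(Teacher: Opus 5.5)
Your proposal is correct and matches the paper's proof, which simply combines \cref{cl:alg-to-bool} with \cref{thm:lb-ABP}. Your extra checks are details the paper leaves implicit: that $f$ is multilinear because the pairs $\set{j,\sigma_B(j)}$ are disjoint, that the claim extends to $2n$ variables, and that the bound is unchanged in terms of the input length.
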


\begin{proof}
Follows from \cref{cl:alg-to-bool} and \cref{thm:lb-ABP}.
\end{proof}

Using nothing more than the definition in \cref{def:hard-poly}, one could show that the function $\bool{f}$ is in the class $\parity\P$.

\begin{claim}
$\bool{f} \in \parity\P$.
\end{claim}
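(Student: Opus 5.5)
We will show $\bool{f} \in \parity\P$ by exhibiting a polynomial-time nondeterministic machine whose number of accepting paths on input $(x,y) \in \set{0,1}^{2n}$ has the same parity as $f(x,y)$ evaluated over $\F_2$. The idea is to expand the sum in \cref{def:hard-poly} completely into monomial-level terms, each of which can be checked in polynomial time.

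\textbf{Expanding the sum.} Over $\F_2$, expanding each factor $(x_j + x_{\sigma_B(j)})$ gives
\[
f(x,y) = \sum_{\substack{B \subseteq \Z_n \\ |B|=n/2}} \; \sum_{c \in \set{0,1}^B} \; \prod_{j \in B} y_j \cdot \prod_{j\in B} x_{\tau_{B,c}(j)},
\]
where $\tau_{B,c}(j) = j$ if $c_j = 0$ and $\tau_{B,c}(j) = \sigma_B(j)$ if $c_j = 1$. Each summand is a product of boolean values, hence lies in $\set{0,1}$. So $f(x,y) \bmod 2$ is the parity of the number of pairs $(B,c)$ whose summand equals $1$ at the given input.

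\textbf{The machine.} On input $(x,y)$, the machine does the following:
\begin{enumerate}
\item Nondeterministically guess a string encoding $B$ as a characteristic vector in $\set{0,1}^n$, together with $c \in \set{0,1}^n$ (coordinates of $c$ outside $B$ must be $0$). Reject if $|B| \neq n/2$ or if $c$ is nonzero outside $B$. This makes the guessed pairs correspond bijectively to the valid pairs $(B,c)$.
\item Compute $\sigma_B$ deterministically by simulating the greedy cycle procedure of \cref{def:hard-poly}. Repeatedly scan the remaining cycle clockwise from $0$ for the first $j \in B$ whose successor among the remaining elements is not in $B$, match the pair, and erase both. Each of the $n/2$ rounds takes $O(n)$ time, so the whole step is polynomial.
\item Accept iff $y_j = 1$ and $x_{\tau_{B,c}(j)} = 1$ for every $j \in B$.
\end{enumerate}
By the expansion above, the number of accepting paths is congruent to $f(x,y)$ modulo $2$.

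\textbf{Main obstacle.} The only nontrivial point is checking that the greedy procedure always succeeds, so that $\sigma_B$ is well defined. As long as some element of $B$ remains, at least one element outside $B$ also remains, since the two sets are always erased in equal numbers. Going around the cycle, some element of $B$ must then be immediately followed by a non-$B$ element. So a match always exists, and the procedure terminates with a bijection.
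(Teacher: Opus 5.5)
Your proposal is correct and follows the paper's proof: guess $B$, reject unless $|B|=n/2$, compute $\sigma_B$ deterministically, and accept according to the value of the summand. The one difference is that you also guess a monomial $c \in \set{0,1}^B$ of $f_B$; this is harmless but unnecessary, since the paper simply evaluates $\prod_{j\in B} y_j \cdot f_B(x)$ directly over $\F_2$ in deterministic polynomial time (your check that the greedy procedure always succeeds is a detail the paper leaves implicit).
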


\begin{proof}
Consider a non-deterministic TM $M$ that, on input $x_0, \ldots, x_{n-1}, y_0, \ldots, y_{n-1}$, guesses a subset $B \subseteq \Z_n$. If $|B| \neq n/2$, $M$ rejects. Otherwise, $M$ treats its input as elements in $\F_2^{2n}$, computes (using the notations of  \cref{def:hard-poly})
\[
\prod_{j \in B} y_j \cdot f_B
\]
and accepts iff the result is 1. This last computation can be done in deterministic polynomial time. Then, on input $x_0, \ldots, x_{n-1}, y_0, \ldots, y_{n-1},$ $M$ has an odd number of accepting paths iff $\bool{f}(x_0, \ldots, x_{n-1}, y_0, \ldots, y_{n-1})=1$.
\end{proof}

However, this is not entirely satisfying. Usually, in the context of circuit lower bounds, one would like to prove lower bounds for \emph{explicit} functions, namely, functions in $\P$ (or $\NP$). In \cref{sec:dp} we prove that the function $\bool{f}$ is actually explicit in that exact sense.

\subsection{Quadratic Lower Bounds for Layered Branching Programs}
\label[section]{sec:layered}

Here we briefly remark that if we assume that the branching program is \emph{layered}, we can obtain a truly quadratic $\Omega(n^2)$ lower bound (for a different function). Let 
\[
S_{n,d} (x_1, \ldots, x_{n}) = \sum_{\substack{B \subseteq [n] \\ |B|=d}} \prod_{i \in B} x_i
\]
denote the elementary symmetric polynomial of degree $d$. Chatterjee et al.\ \cite{CKSV22} proved that any \emph{layered} algebraic branching program computing $S_{n,n/10}$ over fields of characteristic $0$ has size $\Omega(n^2)$. This lower bound does not assume multilinearity (and their model even allows the edge labels to be arbitrary affine functions in $x_1, \ldots, x_n$). A key ingredient in the proof is an upper bound on the dimension of the variety cut by the first order partial derivatives of $S_{n,d}$, proved by \cite{MZ17, LMP19}. This upper bound was recently proved for any characteristic by Orzel \cite{Orzel25}, which implies a lower bound for ABPs over $\F_2$ (which is necessary for us, as we consider algebraic computations over $\F_2$).

Consider then a read-once \emph{layered} $\oplus$-BP of size $S$ computing $\bool{(S_{n,n/10})}$. Applying the transformation in \cref{cl:alg-to-bool}, we obtain a layered algebraic branching program computing $S_{n,n/10}$, which implies, by \cite{CKSV22, Orzel25}, that $S=\Omega(n^2)$. Note that \cref{cl:alg-to-bool} uses the fact that the $\oplus$-BP is read-once. We cannot omit this condition, even though the lower bound proof of \cite{CKSV22, Orzel25} does not require the ABP to be multilinear.

Finally, note that $h:=\bool{(S_{n,d})}$ is obviously an explicit function for any $d$, since for every $x \in \set{0,1}^n$, $h(x) = \binom{|x|}{d} \bmod 2$, where $|x|$ denotes the Hamming weight of $x$.

Chatterjee et al.\ \cite{CKSV22} also proved lower bounds for \emph{unlayered} branching programs, but these lower bounds are much weaker and not helpful for us in this context.

\section{A Polynomial-Time Dynamic Programming Algorithm}
\label[section]{sec:dp}

We now present a polynomial-time algorithm for computing $\bool{f}$, where $f$ is as defined in \cref{def:hard-poly}. To that end, we adopt a more combinatorial view of what $\bool{f}$ actually computes.

Consider again the elements of $\Z_n$ on a cycle and the process of constructing $\sigma_B$ in \cref{def:hard-poly}. We depict the action of matching $j$ to $\sigma_B(j)$ as drawing a directed chord between $j$ and $\sigma_B(j)$, labeled by $y_j$. We shall soon prove that these chords are always \emph{non-crossing}. An example is depicted in \cref{fig:cycle_graph}

% Define global styles outside the environment to protect the compiler parser
\tikzset{
    blackbullet/.style={circle, fill=black, minimum size=5pt, inner sep=0pt},
    bluebullet/.style={circle, fill=blue, minimum size=5pt, inner sep=0pt},
    dotsnode/.style={fill=white, inner sep=2pt, font=\footnotesize},
    every label/.style={font=\footnotesize}
}

\begin{figure}[h]
\begin{center}

\begin{tikzpicture}

    \def\R{2.5} % Radius of the cycle graph

    % =========================================================================
    % VERTICES (Clockwise layout starting from 12 o'clock / 90 degrees)
    % =========================================================================
    \node[blackbullet, label={90:$0$}]                        (N0)   at (90:\R)   {};
    \node[bluebullet,  label={60:\textcolor{blue}{$1$}}]       (N1)   at (60:\R)   {};
    \node[bluebullet,  label={30:\textcolor{blue}{$2$}}]       (N2)   at (30:\R)   {};
    \node[blackbullet, label={0:$3$}]                         (N3)   at (0:\R)    {};
    \node[blackbullet, label={-30:$4$}]                        (N4)   at (-30:\R)  {};
    \node[blackbullet, label={-60:$5$}]                        (N5)   at (-60:\R)  {};
    
    % Ellipses filling the remaining perimeter (6, 7, 8, 9, 10 o'clock positions)
    \node[dotsnode] (Nd1) at (-100:\R) {$\dots$};
    \node[dotsnode] (Nd2) at (-150:\R) {$\dots$};
    \node[dotsnode] (Nd3) at (160:\R)  {$\dots$};
    
    % Last node before returning to 0 (11 o'clock position)
    \node[blackbullet, label={120:$n-1$}]                     (Nnm1) at (120:\R)  {};

    % =========================================================================
    % CHORDS / MATCHING ARCS
    % =========================================================================
    % Chord between 1 and 4 labeled y_1
    \draw[-latex, thick, black!70] (N1) -- (N4) 
        node[midway, left, xshift=-2pt, text=black] {$y_1$};
        
    % Chord between 2 and 3 labeled y_2
    \draw[-latex, thick, black!70] (N2) -- (N3) 
        node[midway, left, xshift=-1pt, text=black] {$y_2$};

\end{tikzpicture}
\caption{Cycle with chords. Blue nodes are in $B$ and black nodes are not in $B$. The matching $\sigma_B$ maps $2$ to $3$ by an edge labeled $y_2$ and then $1$ to $4$ by an edge labeled $y_1$.}
    \label[figure]{fig:cycle_graph}
\end{center}
\end{figure}

Given $x=(x_0, \ldots,x_{n-1}) \in \set{0,1}^n$ we think of $x$ as assigning bits on the $n$ elements of $\Z_n$. A set $B$, along with its non-crossing matching $\sigma_B$, is called \emph{$x$-valid} if for every $j \in B$, $x_j \neq x_{\sigma_B(j)}$. Similarly, given $y \in \set{0,1}^n$, we say that $B$ is \emph{$y$-eligible} if for all $j\in B$, $y_j$=1.

Note that since the additions and multiplications in \cref{def:hard-poly} are modulo $2$, for every input $x=(x_0, \ldots, x_{n-1})$, we have that $f_B(x)=1$ if and only if $B$ is $x$-valid. It follows that given an input $(x,y) \in \set{0,1}^n \times \set{0,1}^n$, the function $\bool{f}$ counts, modulo 2, the number of $x$-valid and $y$-eligible sets $B$. We will show that one can in fact count this number exactly in polynomial time (and therefore trivially compute its parity).

Suppose that instead of a \emph{cycle} of length $n$ we were to construct a non-crossing matching on an \emph{interval} of length $n$ in a similar fashion (one may think of this process as matching opening parentheses `(' with closing parentheses `)' in a well-matched parentheses sequence, but of course, on an interval not every subset $B$ of $n/2$ opening parentheses corresponds to a well-matched sequence). Ignoring the $y$ part for the time being, this setting naturally lends itself to a dynamic programming algorithm: we construct a table $M_{i,j}$ in which the $(i,j)$-th cell counts the number of $x$-valid matchings in the subinterval $[i,j]$, with our eventual goal being to compute $M_{0,n-1}$. To compute $M_{i,j}$, we use the fact that $i$ must be matched to some element in $[i+1, j]$, which gives the recursive formula
\[
M_{i,j} = \sum_{k \in [i+1, j], x_k \neq x_i} M_{i+1,k-1} \cdot M_{k+1, j}.
\]
(The ``$x_k \neq x_i$'' condition makes sure we only count $x$-valid matchings). The base cases for this induction are empty intervals whose value is $1$.

Our dynamic programming algorithm is inspired by this observation, but the fact that we are working with a cycle and not an interval means that some of the matchings can ``wrap around'' and are not accounted for by the formula above.

To solve this issue, we instead perform the count slightly differently. Consider again the set $B$ along with a matching $\sigma_B$. Let $\ell$ be the smallest element $j \in \Z_n$ such that $\sigma_B(j) < j$ when considered as integers (if there's no such element, set $\ell=0$). We say that $\ell$ is the \emph{leader} of the matching.
 
We now state and prove a useful combinatorial lemma.

\begin{lemma}
\label[lemma]{lem:matching-cut-cycle}
Let $B \subseteq \Z_n$ be a subset of size $n/2$ and consider the matching $\sigma_B$ as constructed in \cref{def:hard-poly}. Let $\ell$ be the leader of the matching (as defined above). ``Cut'' the cycle at $\ell$ so that we get the interval 
\[
\ell, \ell+1, \ell+2, \ldots, n-1, 0,1,2,\ldots, \ell-1
\]
For every $j \in B$, draw a directed edge between $j$ and $\sigma_B(j)$ on this interval (an illustration of this operation appears in \cref{fig:cut-cycle}). Then these edges are non-crossing, not wrapping around, and they all go from left to right.
\end{lemma}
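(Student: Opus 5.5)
The plan is to view the greedy construction of $\sigma_B$ as parenthesis matching on the cycle: elements of $B$ are opening parentheses, elements of $\Z_n\setminus B$ are closing ones. For $j\in B$ let $A_j$ denote the clockwise arc from $j$ to $\sigma_B(j)$, endpoints included. I would first prove by induction on the steps of the process the following invariant.

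\emph{Invariant.} When $j$ is matched to $k=\sigma_B(j)$, every element strictly inside $A_j$ has already been erased. Moreover, all these elements were matched among themselves, i.e.\ for each earlier pair $(j',\sigma_B(j'))$ either both endpoints lie strictly inside $A_j$ or neither does.

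The first part holds because $k$ is the next surviving element after $j$. For the second part, an earlier pair with exactly one endpoint strictly inside $A_j$ would have an arc containing $j$ or $k$ in its interior. By induction, everything inside that earlier arc was erased before it, and this contradicts $j,k$ still being present.

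The consequence of the invariant is that the arcs $\{A_j\}_{j\in B}$ form a laminar family: any two are either nested or disjoint. In particular the chords are non-crossing on the cycle, and if $a\in B$ and some $\ell\in B$ lies strictly inside $A_a$, then $A_\ell\subseteq A_a$.

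Next I cut at the leader $\ell$. A chord $j\to\sigma_B(j)$ ``goes left to right without wrapping'' on the interval $\ell,\ell+1,\ldots,\ell-1$ exactly when its clockwise arc $A_j$ does not pass over the cut point between $\ell-1$ and $\ell$. This happens exactly when $\ell\notin A_j\setminus\{j\}$.
\begin{itemize}
\item The case $\ell=\sigma_B(j)$ is impossible, since $\ell\in B$ (or, when there are no wrapping chords and $\ell=0$, see below).
\item Suppose some $a\neq\ell$ has $\ell$ strictly inside $A_a$. First take $\ell$ to be a genuine leader, i.e.\ $\sigma_B(\ell)<\ell$, so $A_\ell$ passes from $n-1$ to $0$. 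By laminarity $A_\ell\subseteq A_a$. Walking clockwise along $A_a$ we meet $a$, then $\ell$, then the wrap point $n-1\to 0$, then $\sigma_B(\ell)$, and finally $\sigma_B(a)$. The stretch from $a$ to $\ell$ does not cross $0$, so $a<\ell$ as integers. The stretch from $\sigma_B(\ell)$ back around to $a$ also does not cross $0$, so $\sigma_B(a)<a$. Hence $a$ is a wrapping element smaller than $\ell$, contradicting the minimality of $\ell$.
\item In the degenerate case where no $j$ has $\sigma_B(j)<j$ and $\ell=0$, no arc crosses the point between $n-1$ and $0$ at all. Here I need to check the boundary situation in which $0$ is itself a closing element; this is fine, since $0$ can then only be an endpoint $\sigma_B(j)$, and that forces $\sigma_B(j)=0<j$, which is excluded.
\end{itemize}

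Finally, with no arc crossing the cut, each $A_j$ maps to a contiguous subinterval of the cut interval with $j$ as its left end, so every edge points left to right. Laminarity of the arcs then translates directly into non-crossing intervals.

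I expect the main obstacle to be stating and proving the invariant cleanly: tracking which elements are erased when, and showing that earlier pairs cannot straddle a later arc. The leader argument is short once laminarity is in hand.
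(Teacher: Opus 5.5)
Your proof is correct, but it is organized differently from the paper's. Both arguments use the same two facts: everything strictly between $j$ and $\sigma_B(j)$ is erased before $j$ is matched, and the leader is minimal.

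The paper uses these facts through a case analysis. The chord of $\ell$ splits the cycle into the blocks $[\ell+1,\sigma_B(\ell)-1]$ and $[\sigma_B(\ell)+1,\ell-1]$, each matched internally, so nothing wraps. The left-to-right property is then checked separately in two cases:
\begin{itemize}
\item for $j<\ell$, directly from the minimality of $\ell$;
\item for $j>\ell$, by a timing argument: $\ell$ and $\sigma_B(\ell)$ are still present when $j$ is matched, so the clockwise neighbor of $j$ must lie to its right inside the block.
\end{itemize}

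You instead reduce all three conclusions to one condition: no clockwise arc passes over the cut between $\ell-1$ and $\ell$. You then rule out every offending arc with a single contradiction. An arc $A_a$ with $\ell$ in its interior contains $A_\ell$, so it passes from $n-1$ to $0$. Hence $a<\ell$ and $\sigma_B(a)<a$, contradicting minimality. This is more uniform. It also makes explicit two things the paper only asserts: that erased elements are matched among themselves, and that chords do not cross within each block (which the paper covers with ``an identical argument''). The paper's version stays closer to the mechanics of the greedy process and avoids stating laminarity as a separate fact.

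One small point of rigor concerns the step ``by laminarity $A_\ell\subseteq A_a$''. This does not follow from set-theoretic nestedness alone when $\sigma_B(\ell)=\ell-1$, because then $A_\ell$ is the whole cycle. For example, $n=4$ and $B=\{1,2\}$ give $\sigma_B(2)=3$, $\sigma_B(1)=0$ and $\ell=1$. In that situation nestedness is equally consistent with $A_a\subseteq A_\ell$.

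Derive the inclusion from the timing part of your invariant instead. If $\ell$ is strictly inside $A_a$, then the pair $(\ell,\sigma_B(\ell))$ was matched first, so both of its endpoints lie strictly inside $A_a$. Then $A_\ell$ cannot leave $A_a$: if it did, it would contain $a$ in its interior, yet $a$ was not yet erased when $\ell$ was matched. With this justification in place, the rest of your argument goes through unchanged.
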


\begin{figure}[h]
\begin{center}

\tikzset{
    blackbullet/.style={circle, fill=black, minimum size=5pt, inner sep=0pt},
    bluebullet/.style={circle, fill=blue, minimum size=5pt, inner sep=0pt},
    dotsnode/.style={fill=white, inner sep=2pt, font=\footnotesize},
    every label/.style={font=\footnotesize}
}

\begin{tikzpicture}

    % =========================================================================
    % PART 1: THE CYCLE GRAPH LAYOUT (TOP)
    % =========================================================================
    \def\R{2.5} % Cycle radius

    % Clockwise layout starting at 12 o'clock (90 degrees)
    \node[blackbullet, label={90:$0$}]                        (N0)   at (90:\R)   {};
    \node[bluebullet,  label={60:\textcolor{blue}{$1$}}]       (N1)   at (60:\R)   {};
    \node[bluebullet,  label={30:\textcolor{blue}{$2$}}]       (N2)   at (30:\R)   {};
    \node[blackbullet, label={0:$3$}]                         (N3)   at (0:\R)    {};
    \node[blackbullet, label={-30:$4$}]                        (N4)   at (-30:\R)  {};
    \node[blackbullet, label={-60:$5$}]                        (N5)   at (-60:\R)  {};
    
    % Ellipses along the bottom right
    \node[dotsnode] (Nd1) at (-90:\R) {$\dots$};
    
    % Node \ell at 7 o'clock (-120 degrees)
    \node[bluebullet,  label={-120:\textcolor{blue}{$\ell$}}]  (Nell) at (-120:\R) {};
    
    % Ellipses filling the left perimeter
    \node[dotsnode] (Nd2) at (-150:\R) {$\dots$};
    \node[dotsnode] (Nd3) at (150:\R)  {$\dots$};
    
    % Last node before returning to 0 (11 o'clock / 120 degrees)
    \node[blackbullet, label={120:$n-1$}]                     (Nnm1) at (120:\R)  {};

    % Cycle Directed Arcs
    % \ell to 0 (labeled on the right so it stays inside the circle)
    \draw[-latex, thick, black!70] (Nell) -- (N0) 
        node[midway, right, xshift=2pt, text=black] {$y_\ell$};

    % 1 to 4 (labeled on the left to stay inside the circle)
    \draw[-latex, thick, black!70] (N1) -- (N4) 
        node[midway, left, xshift=-2pt, text=black] {$y_1$};
        
    % 2 to 3 
    \draw[-latex, thick, black!70] (N2) -- (N3) 
        node[midway, left, xshift=-1pt, text=black] {$y_2$};

    % =========================================================================
    % PART 2: THE LINEAR PATH LAYOUT (BOTTOM)
    % =========================================================================
    \def\LY{-6.5} % Baseline Y-coordinate shifted down

    % Draw the continuous path baseline 
    \draw[thick, gray!30] (-5.8, \LY) -- (5.8, \LY);

    % Linear sequential nodes (Bullets with labels below)
    \node[bluebullet,  label={below:\textcolor{blue}{$\ell$}}] (L_ell)   at (-5.0, \LY) {};
    \node[dotsnode]                                            (L_d1)    at (-4.0, \LY) {$\dots$};
    \node[blackbullet, label={below:$n-1$}]                    (L_nm1)   at (-3.0, \LY) {};
    \node[blackbullet, label={below:$0$}]                      (L_0)     at (-2.0, \LY) {};
    \node[bluebullet,  label={below:\textcolor{blue}{$1$}}]    (L_1)     at (-1.0, \LY) {};
    \node[bluebullet,  label={below:\textcolor{blue}{$2$}}]    (L_2)     at (0.0,  \LY) {};
    \node[blackbullet, label={below:$3$}]                      (L_3)     at (1.0,  \LY) {};
    \node[blackbullet, label={below:$4$}]                      (L_4)     at (2.0,  \LY) {};
    \node[blackbullet, label={below:$5$}]                      (L_5)     at (3.0,  \LY) {};
    \node[dotsnode]                                            (L_d2)    at (4.0,  \LY) {$\dots$};
    \node[blackbullet, label={below:$\ell-1$}]                 (L_ellm1) at (5.0,  \LY) {};

    % Linear Directed Arcs (Nested perfectly without crossing)
    % 1. The outer arc from \ell to 0
    \draw[-latex, thick, black!70] (L_ell) to[out=45, in=135] 
        node[midway, above, text=black] {$y_\ell$} (L_0);

    % 2. Middle arc from 1 to 4
    \draw[-latex, thick, black!70] (L_1) to[out=55, in=125] 
        node[midway, above, text=black] {$y_1$} (L_4);

    % 3. Tight inner arc from 2 to 3
    \draw[-latex, thick, black!70] (L_2) to[out=70, in=110] 
        node[midway, above, text=black] {$y_2$} (L_3);

\end{tikzpicture}
\caption{``Cutting'' a cycle at the leader $\ell$ and drawing the edges on an interval.}
\label[figure]{fig:cut-cycle}
\end{center}
\end{figure}

\begin{proof}
If there's no element $j \in \Z_n$ such that $\sigma_B(j)<j$ then $\ell=0$. In this case, the interval equals $0,\ldots,n-1$, and the fact that the edges all go from left to right is rather obvious, as $\sigma_B(j) > j$ for all $j$. 
The proof of the non-crossing property is by induction on the construction of $\sigma_B$. We claim that at each stage, when we match $j$ to $\sigma_B(j)$, all elements in $[j+1, \sigma_B(j)-1]$ have already been matched. This is definitely true in the first stage, as we match $j$ to $j+1$ so that interval is empty. At any later step, we similarly match $j$ to its neighbor on the cycle $k$. If $k$ is a neighbor of $j$ it means that all elements in $[j+1, k-1]$ were already deleted and hence matched before.

Suppose now $\ell >0$ is a leader such that $\sigma_B(\ell) < \ell$, and order the elements as
\[
\ell, \ell+1, \ell+2, \ldots, n-1, 0,1,2,\ldots, \sigma_B(\ell), \ldots, \ell-1
\]
By definition there is a directed edge from $\ell$ to $\sigma_B(\ell)$. As before, since $\ell$ was connected to $\sigma_B(\ell)$ it means that all elements on the arc $[\ell+1, \sigma_B(\ell)-1]$ of the cycle were already matched to one another and erased from the cycle. Hence the elements in each of the intervals $[\ell+1, \sigma_B(\ell)-1]$ and $[\sigma_B(\ell)+1,\ell-1]$ are matched among themselves and the edge from $\ell$ to $\sigma_B(\ell)$ doesn't intersect any other edge, and no edge wraps around. With an identical argument we can argue that in each interval there are no intersecting edges.

To prove that each edge goes from left to right, consider a matched pair $j \in B$ and $\sigma_B(j) \notin B$. If $j=\ell$ this was already established. If $j \in [0,\ell-1]$, then by the definition of a leader we must have $\sigma_B(j)>j$ and clearly in the interval as ordered above the edge goes from left to right. The final case to consider is thus $j \in [\ell+1, n-1]$: suppose that $\sigma_B(j)$ resides to the left of $j$ in the ordering above, both lying in the interval between $\ell$ and $\sigma_B(\ell)$. In the construction of $\sigma_B$, we try to match an element with its neighboring element on the cycle, ordered clockwise. We could have only matched $\ell$ with $\sigma_B(\ell)$ if all the elements 
\begin{equation}
\label{eq:subinterval}
\ell+1, \ell+2, \ldots, n-1, 0, 1, \ldots, \sigma_B(\ell)-1
\end{equation}
were already matched among themselves, which means that when $j$ was matched to $\sigma_B(j)$, $\ell$ and $\sigma_B(\ell)$ were not erased yet. Since we match adjacent elements directed clockwise, we must have $\sigma_B(j)$ appearing to the right of $j$ in the subinterval \eqref{eq:subinterval}, as otherwise $j$ could not be adjacent to $\sigma_B(j)$ (since $\ell$ was not yet erased).
\end{proof}

Conversely, given any such cyclic shift of an interval with leader $\ell$ and labeled matched edges, we can uniquely recover the set $B$ and the mapping $\sigma_B$ using the direction of the edges (recall that an edge directed from $j$ to $k$ implies that $j \in B$ and $k\notin B$).

Consider now an \emph{interval} (rather than an arc) $[i,j]$, $i\le j$. We say that a set $B$ with a non-crossing matching $\sigma_B$ is \emph{external} to $[i,j]$ if $\ell$ is not in the interval $[i,j]$. Let $\Ext_{i,j}$ denote the number of possible ways to match the elements of $[i,j]$, using $x$-valid and $y$-eligible \emph{external} sets $B$.

Similarly, $B$ is \emph{internal} to $[i,j]$ if $\ell$ is inside the interval $[i,j]$, and let $\Int_{i,j}$ denote the number of possible ways to match the elements of $[i,j]$, using $x$-valid and $y$-eligible \emph{internal} sets $B$.

Since every $B$ is internal to $[0,n-1]$, we are interested in computing $\Int_{0,n-1}$.

We are now ready to prove the main theorem of this section.

\begin{theorem}
\label[theorem]{thm:counting-sets}
There exists a polynomial-time algorithm that, given $x,y \in \set{0,1}^n \times \set{0,1}^n$, computes the number of $x$-valid and $y$-eligible sets $B$.
\end{theorem}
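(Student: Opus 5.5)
The plan is to reduce the cycle problem to $O(n)$ interval problems, one for each possible leader, and solve each with the interval dynamic program $M_{i,j}$ described above. \cref{lem:matching-cut-cycle} says that cutting the cycle at the leader $\ell$ of $\sigma_B$ turns $\sigma_B$ into a non-crossing, non-wrapping, left-to-right matching on the interval $\ell, \ell+1, \ldots, n-1, 0, \ldots, \ell-1$. I will show that the $x$-valid, $y$-eligible sets $B$ are in bijection with pairs $(\ell, \tau)$, where:
\begin{itemize}
\item $\ell \in \Z_n$;
\item $\tau$ is a perfect non-crossing left-to-right matching of the cut interval, with every edge $j \to k$ satisfying $x_j \neq x_k$ and $y_j = 1$;
\item if $\ell > 0$, $\tau$ matches $\ell$ to some element of $[0,\ell-1]$;
\item if $\ell = 0$, no further constraint is imposed.
\end{itemize}
The number of such pairs is then computed directly.

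\emph{Injectivity and well-definedness.} The map sends $B$ to its leader $\ell$ and the cut matching. By \cref{lem:matching-cut-cycle}, this image satisfies all the constraints. In particular, if $\ell>0$ then $\sigma_B(\ell)<\ell$, so $\ell$ is matched into $[0,\ell-1]$. As noted after the lemma, $B$ is recovered from the directions of the edges, so the map is injective.

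\emph{Surjectivity.} Take a pair $(\ell,\tau)$ and let $B$ be the set of left endpoints of $\tau$. Two things must be checked.

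First, the greedy procedure of \cref{def:hard-poly} applied to $B$ produces exactly $\tau$. I will argue this by viewing $B$ as opening and $\Z_n\setminus B$ as closing parentheses arranged on the cycle. The greedy step repeatedly deletes an adjacent pair ``$()$''. Such cancellation is confluent: two distinct adjacent pairs are disjoint, so deleting one leaves the other adjacent. Hence every order of deletions yields the same matching, and this common matching equals the nesting matching of any well-formed linearization. Since $\tau$ is such a nesting matching on the cut interval, it coincides with $\sigma_B$.

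Second, $\ell$ is indeed the leader of $\sigma_B$. Every $j<\ell$ lies in the right-hand block $[0,\ell-1]$ of the cut interval. Its edge goes right and does not wrap, so it stays inside that block and $\sigma_B(j)>j$. Meanwhile $\ell$ itself satisfies $\sigma_B(\ell)<\ell$. So $\ell$ is the smallest element with this property. When $\ell=0$, no edge goes from the block $[0,n-1]$ to a smaller integer, so no element has $\sigma_B(j)<j$ and the leader is $0$ by convention. This gives the bijection.

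\emph{Counting.} For each fixed $\ell$, relabel the cut interval as positions $0,\ldots,n-1$. Compute the table $M_{a,b}$ of $x$-valid matchings of subintervals via the recurrence
\[
M_{a,b} = \sum_{\substack{k \in [a+1,b] \\ x_k \neq x_a,\ y_a = 1}} M_{a+1,k-1}\, M_{k+1,b},
\]
with empty intervals having value $1$. The extra condition $y_a=1$ enforces $y$-eligibility; it is valid because the leftmost element of any subinterval is necessarily a left endpoint. This takes $O(n^3)$ arithmetic operations on integers of at most $n$ bits (all counts are at most $2^n$).

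The contribution of $\ell$ is then obtained as follows. If $\ell=0$, it is $M_{0,n-1}$. If $\ell>0$, restrict the outermost recurrence step at position $0$ (which is $\ell$) to partners $k$ corresponding to cycle elements in $[0,\ell-1]$, i.e.\ cut positions $n-\ell,\ldots,n-1$. Summing the contributions over all $\ell$ gives the answer in $O(n^4)$ arithmetic operations, which is polynomial time. Alternatively, the $\Ext$/$\Int$ tables can organize the same computation in a single pass.

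The main obstacle is the surjectivity step, specifically the confluence argument showing that an arbitrary well-formed non-crossing matching on a cut interval is exactly the greedy $\sigma_B$. \cref{lem:matching-cut-cycle} only gives the forward direction, so this argument must be written out carefully.
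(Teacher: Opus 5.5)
Your argument is correct, but it is organized differently from the paper's. The paper never fixes the leader. It runs one dynamic program over the integer intervals $[i,j]\subseteq[0,n-1]$ with two tables, $\Ext_{i,j}$ and $\Int_{i,j}$, distinguished by whether the leader lies in the interval; this information is what fixes the orientation of every chord, and the answer $\Int_{0,n-1}$ is obtained in $O(n^3)$ time. You instead enumerate $\ell$, cut the cycle there, and run the plain Catalan-type recursion on the cut interval, with one extra constraint on the partner of $\ell$.

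Your route buys a clean two-sided bijection. The confluence argument shows that every left-to-right non-crossing matching of a cut interval is exactly $\sigma_B$ for its set of left endpoints, with leader $\ell$. The paper's count also relies on this converse but does not spell it out. Your route also sidesteps the delicate bookkeeping in \cref{cl:internal-matchings}. There, the boundary situations $\ell=k$ and the convention $\ell=0$ are handled only implicitly, through the convention that $\Int$ of an empty interval equals $1$.

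The paper's route buys the $O(n^3)$ bound, and its recurrences translate verbatim into the $O(n^3)$-size multilinear circuit of \cref{thm:hard-polynomial-VP}. Your extra factor of $n$ is not essential, though. The table for the cut at $\ell$ depends only on which cyclic arc of $\Z_n$ each subinterval is. So a single table indexed by the $O(n^2)$ arcs serves all $n$ cuts, which brings your algorithm, and the corresponding circuit, down to $O(n^3)$ as well.
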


\begin{proof}[Proof of \cref{thm:counting-sets}]
We compute $\Ext_{i,j}$ and $\Int_{i,j}$ by induction on the length of the interval. The length needs to be even for such matchings to exist, that is, if $i\le j$ and the length $(j-i+1)$ is an odd number then $\Ext_{i,j}=\Int_{i,j}=0$. Further, if the interval is empty, that is $j<i$, $\Ext_{i,j}=\Int_{i,j}=1$. 

For larger lengths, our first claim is the following:

\begin{claim}
\label[claim]{cl:external-matchings}
\begin{equation}
\label{eq:external}
\Ext_{i,j} = \sum_{k \in [i+1, j], x_i \neq x_k} y_i \cdot \Ext_{i+1,k-1} \cdot \Ext_{k+1, j}.
\end{equation}
\end{claim}

\begin{proof}[Proof of \cref{cl:external-matchings}]
Indeed, the equation goes over all possible ways to match the element $i$ with an element $k \in [i,j]$. Since we're only counting $x$-valid matchings, we only need to consider indices $k$ such that $x_i \neq x_k$. Since the leader $\ell$ is not in $[i,j]$, it is not in $[i+1,k-1]$ nor in $[k+1,j]$, so we multiply the relevant number of external matchings for these subintervals. Further, we claim that since the leader $\ell$ is not in $[i,j]$ it must be that $i \in B$ and $k \notin B$: if $\ell<i$, then by cutting the cycle at $\ell$ we obtain the interval
\[
\ell, \ell+1, \ldots, i, \ldots, k, \ldots, j \ldots, n-1, 0, 1, \ldots, \ell-1.
\]
By \cref{lem:matching-cut-cycle}, since the directed edges go from left to right, we see that if $i$ is matched to $k$ we must have $i \in B, k \notin B$. 

On the other hand, if $\ell > j$, then since $\ell$ is the smallest element with $\sigma_B(\ell) < \ell$, and $i < k < j < \ell$, we must have $k=\sigma_B(i)$ and thus $i \in B$ and $k\notin B$.

Since we established that $i \in B$ and $k \notin B$ in both cases, we multiply by $y_i$ to only count $y$-eligible matchings.
\end{proof}

Now consider internal matchings.

\begin{claim}
\label[claim]{cl:internal-matchings}
\begin{equation}
\label{eq:internal}
\Int_{i,j} = \sum_{k \in [i+1, j], x_i \neq x_k} \left( y_k \cdot \Int_{i+1,k-1} \cdot \Ext_{k+1, j} + y_i \cdot \Ext_{i+1,k-1} \cdot \Int_{k+1,j} \right).
\end{equation}
\end{claim}

\begin{proof}[Proof of \cref{cl:internal-matchings}]
Here we count matchings in which the leader $\ell$ is inside $[i,j]$. We again go over all possible elements $k$ that can be matched to $i$. We have either $\ell \in [i,k]$ or $\ell \in [k+1, j]$.

In the latter case ($\ell \in [k+1,j]$), by definition of $\ell$, and since $k < \ell$, the directed edge must go from $i$ to $k$ and therefore $i \in B, k\notin B$ (as otherwise $k$ would be a smaller element than $\ell$ with $\sigma_B(k) < k$, which contradicts the definition of $\ell$).  We thus multiply the number of external matchings on $[i+1, k-1]$ (since the leader is not in that interval) by the number of internal matchings on $[k+1, j]$, going over all $k$ such that $x_k \neq x_i$ (to only count $x$-valid matchings) and multiplying by $y_i$ (to only count $y$-eligible matchings). This accounts for the second term in \eqref{eq:internal}.

We are left with the case $\ell \in [i,k]$. In this case, when we cut the cycle at position $\ell$, 
\[
\ell, \ell+1, \ldots, k, \ldots, j, \ldots, n-1, 0, 1, \ldots, i, \ldots, \sigma_B(\ell), \ldots, \ell-1
\]
By \cref{lem:matching-cut-cycle} we see that the edge must be directed from $k$ to $i$, that is, $k \in B$ and $i=\sigma_B(k) \notin B$. We thus multiply the number of internal matchings on $[i+1, k-1]$ by the number of external matchings on $[k+1, j]$, going over all $k$ such that $x_k \neq x_i$ (to only count $x$-valid matchings) and multiplying by $y_k$ (to only count $y$-eligible matchings). This accounts for the first term in \eqref{eq:internal}.
\end{proof}

\cref{cl:external-matchings} and \cref{cl:internal-matchings} now establish \cref{thm:counting-sets}. As noted above, in order to compute $\Int_{0,n-1}$ we recursively compute $\Int_{i,j}$ and $\Ext_{i,j}$ for all $i<j$, by induction on the length of the interval. There are $O(n^2)$ quantities to compute, and using \eqref{eq:external} and \eqref{eq:internal}, each can be computed in time $O(n)$.
\end{proof}

\section{Algebraic Circuits Lower Bound for a Polynomial in $\VP$}
\label[section]{sec:VP}

Our proof from \cref{sec:dp} also implies that the polynomial $f$ from \cref{def:hard-poly} is in $\VP$, the class of polynomial families of degree $\poly(n)$ and circuits of size $\poly(n)$ (in \cite{AKV20}, it is only claimed to be in $\VNP$). This shows that the lower bound of \cite{AKV20} also holds for a polynomial in $\VP$. We remark that the technical condition that Alon et al.\ \cite{AKV20} need $f$ to satisfy is that its coefficient matrix is full rank under any partition of the variables. This technique was introduced by Raz \cite{Raz09} and was later also used in \cite{Raz06, RY08}, to name only a few examples. A more systematic study of this technique appears in \cite{FLSY26}. We do not go into details here and refer to any of these papers for precise definitions. We call such a polynomial a \emph{full rank polynomial}.

A full rank polynomial in $\VP$ was already constructed in \cite{RY08}. For technical reasons, however, in the proof one needs to consider such a polynomial $f(x,y) \in \F_2 [x,y]$ as a polynomial in $x$ over the field $\F(y)$ (here $x,y$ are vectors of variables), and then the rank is computed over $\F(y)$. In the construction of Raz and Yehudayoff \cite{RY08}, the number of variables in $y$ is $O(n^3)$, whereas the lower bound of \cite{AKV20} is nearly-quadratic in the number of variables in $x$. Therefore, if the complexity is measured as a function of the total number of variables, the lower bound is meaningless for the polynomial of \cite{RY08} (in $f$ from \cref{def:hard-poly} the number of variables in $y$ is $n$, so this problem doesn't arise). When $\F$ is large enough, one can take the construction of Raz and Yehudayoff \cite{RY08} and plug in random values to the $y$ variables. With high probability, after this fixing, one obtains a full-rank $n$-variate polynomial over $\F$. This construction, however, is not explicit (and requires large fields). For further discussion on this topic see Section 4 of \cite{AKV20}.

Fortunately, this somewhat annoying issue is no longer an issue, since we can prove:

\begin{theorem}
\label[theorem]{thm:hard-polynomial-VP}
Let $f$ be as in \cref{def:hard-poly}. Then $f$ has a circuit of size $O(n^3)$ (and in particular, $f \in \VP$).
\end{theorem}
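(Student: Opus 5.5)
The plan is to turn the dynamic program of \cref{thm:counting-sets} into an algebraic circuit, replacing numbers by polynomials. For every interval $[i,j]$ with $0 \le i \le j \le n-1$ (and for empty intervals) we introduce two polynomials $E_{i,j}, I_{i,j} \in \F[x,y]$. They are defined by the recurrences \eqref{eq:external} and \eqref{eq:internal}, with two changes: the indicator condition $x_i \neq x_k$ is replaced by the factor $(x_i + x_k)$ (over $\F_2$; over a general field we use the appropriate edge factor of $f_B$ in its oriented form), and each $y$-factor is kept as a formal variable. The base cases are $E = I = 1$ on empty intervals and $E = I = 0$ on odd-length intervals. Concretely,
\[
E_{i,j} = \sum_{k=i+1}^{j} y_i (x_i + x_k) E_{i+1,k-1} E_{k+1,j},
\]
\[
I_{i,j} = \sum_{k=i+1}^{j} (x_i+x_k)\left( y_k I_{i+1,k-1} E_{k+1,j} + y_i E_{i+1,k-1} I_{k+1,j}\right).
\]
We then claim that $I_{0,n-1} = f$ as polynomials.

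The main step is to verify this identity as a polynomial identity, not only pointwise on $\{0,1\}$. We reinterpret the proofs of \cref{cl:external-matchings} and \cref{cl:internal-matchings} as bijections. By \cref{lem:matching-cut-cycle} and the remark after it, each set $B$ with $|B| = n/2$ corresponds bijectively to a leader $\ell$ together with a left-to-right non-crossing matching on the cut interval. The case analysis in the two claims then shows the following. Unfolding the recurrence for $I_{0,n-1}$ produces a sum over all such structures, each appearing exactly once. The structure for $B$ contributes the monomial product $\prod_{j\in B} y_j (x_j + x_{\sigma_B(j)})$, which is exactly the $B$-term of $f$.

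Formally, we prove by induction on $j-i$ a generating-function statement. $E_{i,j}$ equals the sum, over all non-crossing perfect matchings of $[i,j]$ that are consistent with some $B$ whose leader lies outside $[i,j]$, of the product of the corresponding edge terms. $I_{i,j}$ is the analogous sum over matchings consistent with a leader inside $[i,j]$. The inductive step is identical to the counting argument, because that argument never used the values of $x$ and $y$ except through the factors attached to each edge. A subtle point is that $f_B$ uses $x_j + x_{\sigma_B(j)}$, which is symmetric. So the orientation only matters for the $y$-factor, and the recurrences already attach $y$ to the correct endpoint.

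The size bound is then routine. There are $O(n^2)$ polynomials $E_{i,j}, I_{i,j}$. Each is a sum of $O(n)$ terms, and each term is a product of $O(1)$ previously computed gates with $x_i+x_k$ and a $y$-variable. Hence each costs $O(n)$ gates, giving $O(n^3)$ in total. The degree is clearly $\poly(n)$, so $f \in \VP$.

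I expect the main obstacle to be making the bijection fully rigorous, which is the correctness of the combinatorial decomposition rather than the circuit construction. One must ensure that the external/internal split and the choice of edge direction in each case produce every valid $(B,\sigma_B)$ exactly once and no spurious structures. In particular, every left-to-right non-crossing matching of the cut interval must arise from the greedy construction of $\sigma_B$, so that no extra monomials appear. I would first try to prove this converse of \cref{lem:matching-cut-cycle} directly: given such a matching, run the greedy clockwise procedure and check by induction on nesting depth that it reproduces the given pairs.
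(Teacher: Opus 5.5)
Your circuit is the paper's: the same gates $E_{i,j},I_{i,j}$, the same recurrences and base cases, output $I_{0,n-1}$, and the same $O(n^2)\cdot O(n)$ size count. The proofs differ in how correctness is certified. Over $\F_2$, the paper notes that the gate for $[i,j]$ uses only variables indexed in $[i,j]$, so the circuit is syntactically multilinear. It agrees with $f$ on the cube by \cref{thm:counting-sets}, hence equals $f$. A direct polynomial-identity induction, as you propose, is a fine alternative and also works over other fields.

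The gap is that the invariant you state for $I_{i,j}$ is false with the base case $I=1$ on empty intervals, so the induction fails as written. For $n\ge 4$ the recurrence gives
\[
I_{1,2}=(x_1+x_2)\bigl(y_2\,I_{2,1}E_{3,2}+y_1\,E_{2,1}I_{3,2}\bigr)=(x_1+x_2)(y_1+y_2).
\]
But the only orientation of $\{1,2\}$ consistent with a leader in $[1,2]$ is $2\to 1$, which contributes only $y_2(x_1+x_2)$. Conversely, in the first term the configurations where $k$ itself is the leader (with $k>i+1$) need the \emph{external} sum on $[i+1,k-1]$, and your reading of $I_{i+1,k-1}$ excludes them. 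The stated meaning of $\Int_{i,j}$ in \cref{cl:internal-matchings} has the same mismatch. So saying the step is identical to that counting argument does not close the gap, even though the recurrences and the final value are right.

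The fix is in the invariant, not the circuit.
\begin{itemize}
\item Let $I_{i,j}$ be the sum over pairs $(M,c)$, where $M$ is a non-crossing perfect matching of $[i,j]$ and $c\in[i+1,j+1]$ satisfies $c=j+1$ or $c$ is matched to a point on its left. For an empty $[a,a-1]$ only $c=a$ is allowed, giving $1$.
\item An edge $\{a<b\}$ is oriented $b\to a$ if $a<c\le b$ and $a\to b$ otherwise. An edge with tail $t$ and head $h$ has weight $y_t(x_t+x_h)$.
\item $E_{i,j}$ is the sum over matchings with all edges oriented left to right.
\end{itemize}
Now split on the partner $k$ of $i$. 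The edge $\{i,k\}$ is reversed iff $c\le k$. The positions $c=i+1$ (when $k>i+1$) and $c=k+1$ (when $k<j$) are inadmissible, since the leftmost point of an inner block is matched to its right. What remains is exactly the two terms of the recurrence. The option $c=k$ inside $I_{i+1,k-1}$ accounts for $k$ being the leader.

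At the top level, $B$ corresponds to $c=\ell$ if $\ell\ge 1$ and to $c=n$ if $\ell=0$. The orientations agree with $\sigma_B$ by \cref{lem:matching-cut-cycle}.

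For the surjectivity you flag, a count suffices. The map $B\mapsto(M,c)$ is injective, since $B$ is the set of tails. Each of the $C_{n/2}$ (Catalan number) non-crossing perfect matchings of $[0,n-1]$ has exactly $n/2+1$ admissible top-level cuts: the right endpoint of each edge, plus $c=n$. That gives $(n/2+1)C_{n/2}=\binom{n}{n/2}$ pairs, exactly as many as sets $B$.
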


\begin{proof}
We construct a circuit following the proof of \cref{thm:counting-sets}, using equations similar to \eqref{eq:external} and \eqref{eq:internal}. For every $i\ \le j$ such that $j-i+1$ is even we add two gates $E_{i,j}$, $I_{i,j}$ and connect them as follows: 
\begin{align*}
E_{i,j} &= \sum_{k \in [i+1,j]} y_i \cdot (x_i + x_k) \cdot E_{i+1,k-1} \cdot E_{k+1, j} \\
I_{i,j} &= \sum_{k \in [i+1,j]}  \Big( y_k \cdot (x_i + x_k) \cdot I_{i+1, k-1} \cdot E_{k+1,j} + y_i \cdot (x_i + x_k) \cdot E_{i+1, k-1} \cdot I_{k+1,j}  \Big)
\end{align*}
(where each gate $E_{i',i''}$ is understood to be the constant $1$ if $i''<i'$, and similarly for $I_{i',i''}$). The output of the circuit is the gate $I_{0,n-1}$.

The circuit $C$ is multilinear: it follows by induction on $j-i+1$ that $E_{i,j}$ and $I_{i,j}$ are multilinear since the subcircuits rooted at them are only connected to $x$ and $y$ variables with indices in $[i,j]$.

One can prove by induction that $C$ computes $f$. A different way to see it is by directly reducing to \cref{thm:counting-sets}. Since the circuit $C$ agrees with $\bool{f}$ functionally on $\set{0,1}^n \times \set{0,1}^n$, and since it is multilinear, it must compute the polynomial $f$.

The circuit $C$ has size $O(n^3)$: the fan-in of each of the $O(n^2)$ gates $I_{i,j}$ and $E_{i,j}$ is $O(n)$, for a total of $O(n^3)$ edges. One can then convert this circuit to a bounded fan-in circuit with $O(n^3)$ gates.
\end{proof}

Note that the polynomial in \cref{def:hard-poly} is defined over any field and \cref{thm:hard-polynomial-VP} is true over any field.

\cref{thm:hard-polynomial-VP} shows a barrier for the technique of analyzing the rank of the coefficient matrix under various partitions: it can't prove lower bounds beyond $\Omega(n^3)$ (strictly speaking, this also follows from the randomized construction mentioned above using the polynomial of Raz and Yehudayoff \cite{RY08}. The \emph{existence} of a circuit of size $O(n^3)$ computing a full-rank polynomial, even non-explicitly, is enough to prove the barrier result).

\section{Open Problems}

One could hope, of course, to solve \cite[Research Problem 16.14]{Juk12} completely and prove super-polynomial lower bounds for read-once parity branching programs. Such a result would follow from super-polynomial lower bounds for syntactically multilinear algebraic branching programs. A study of the limitations of current techniques for proving such lower bounds was recently initiated by Fabris et al. \cite{FLSY26}. As a first step, we propose proving a cubic $\Omega(n^3)$ lower bound, perhaps by proving such a lower bound on syntactically multilinear algebraic circuits: that would prove that the construction in \cref{thm:hard-polynomial-VP} is optimal, but it's worth mentioning that we have no strong reasons to believe that it is indeed optimal, or that there isn't another full-rank polynomial with a circuit of size $O(n^2)$ (by the results of \cite{AKV20}, such a construction \emph{would} be optimal, up to logarithmic factors).

More generally, for many boolean models of computation, the best lower bounds known are proved using Nechiporuk's \cite{Nec66} method. It is interesting to try and find more cases in which stronger lower bounds can be proved using various methods, in particular using  reductions to lower bounds in algebraic circuit complexity.

\bibliographystyle{alphaurlpp}
\bibliography{references}

\end{document}